\newcommand{\defeq}{\stackrel{\textup{def}}{=}}
\newcommand{\nfrac}{\nicefrac}
\newcommand{\eps}{\epsilon}
\renewcommand{\epsilon}{\varepsilon}
\def\showauthornotes{1} 
\def\showdraftbox{1}
\newtheorem{theorem}{Theorem}[section]
\newtheorem{lemma}[theorem]{Lemma}
\newtheorem{fact}[theorem]{Fact}
\def\FullBox{\hbox{\vrule width 6pt height 6pt depth 0pt}}
\def\qed{\ifmmode\qquad\FullBox\else{\unskip\nobreak\hfil
\penalty50\hskip1em\null\nobreak\hfil\FullBox
\parfillskip=0pt\finalhyphendemerits=0\endgraf}\fi}
\def\qedsketch{\ifmmode\Box\else{\unskip\nobreak\hfil
\penalty50\hskip1em\null\nobreak\hfil$\Box$
\parfillskip=0pt\finalhyphendemerits=0\endgraf}\fi}
\newcommand\bz{\mathbb Z}
\newcommand\nat{\mathbb N}
\newcommand{\marginlabel}[1]%
{\mbox{}\marginpar{\it{\raggedleft\hspace{0pt}#1}}}
\newcommand{\poly}{\mathrm{poly}}
\newcommand{\floor}[1]{\left\lfloor\, {#1}\,\right\rfloor}
\newcommand{\ceil}[1]{\left\lceil\, {#1}\,\right\rceil}
\definecolor{Mygray}{gray}{0.8}
\let\csname ifcommentflag\expandafter\endcsname
\newcommand{\Authornote}[2]{{\sf\small\color{red}{[#1: #2]}}}
\newcommand{\Authoredit}[2]{{\sf\small\color{red}{[#1]}\color{blue}{#2}}}
\newcommand{\Authorcomment}[2]{{\sf \small\color{gray}{[#1: #2]}}}
\newcommand{\Authorfnote}[2]{\footnote{\color{red}{#1: #2}}}
\newcommand{\Authorfixme}[1]{\Authornote{#1}{\textbf{??}}}
\newcommand{\Authormarginmark}[1]{\marginpar{\textcolor{red}{\fbox{
#1:!}}}}
\newcommand{\Authornote}[2]{}
\newcommand{\Authoredit}[2]{}
\newcommand{\Authorcomment}[2]{}
\newcommand{\Authorfnote}[2]{}
\newcommand{\Authorfixme}[1]{}
\newcommand{\Authormarginmark}[1]{}
\def\abs#1{\left| #1 \right|}
\newlength{\pgmtab}  
\newlength{\tpush}
\begin{document}
\title{\bf Matrix Inversion Is As Easy As Exponentiation}
\date{}

\begin{abstract}
  We prove that the inverse of a positive-definite matrix can be
  approximated by a weighted-sum of a small number of matrix
  exponentials. Combining this with a previous result \cite{OSV}, we
  establish an equivalence between matrix inversion and exponentiation
  up to polylogarithmic factors.  In particular, this connection
  justifies the use of Laplacian solvers for designing fast
  semi-definite programming based algorithms for certain graph
  problems. The proof relies on the Euler-Maclaurin formula and
  certain bounds derived from the Riemann zeta function.
\end{abstract}

\author{Sushant Sachdeva}
\thanks{Sushant Sachdeva. Department of Computer Science, Princeton University,
USA. \url{sachdeva@cs.princeton.edu}}

\author{Nisheeth K. Vishnoi}
\thanks{Nisheeth K. Vishnoi. Microsoft Research, Bangalore, India. \url{nisheeth.vishnoi@gmail.com}}

\maketitle

\vspace{-6mm}

\section{Matrix Inversion vs. Exponentiation}
Given a symmetric $n \times n$ matrix $A,$ its matrix exponential is
defined to be $e^A \defeq \sum_{i \geq 0} \frac{A^i}{i!}.$ This
operator is of fundamental interest in several areas of mathematics,
physics, and engineering, and has recently found important
applications in algorithms, optimization and quantum
complexity. Roughly, these latter applications are manifestations of
the matrix-multiplicative weight update method and its deployment to
solve semi-definite programs efficiently (see
\cite{AK,OSV,QIPJournal}).  For fast graph algorithms, the quantity of
interest is $e^{-L}v,$ where $L$ is the combinatorial Laplacian of a
graph, and $v$ is a vector. The vector $e^{-L}v$ can also be
interpreted as the resulting distribution of a certain continuous-time
random walk on the graph with starting distribution $v.$ In
\cite{OSV}, appealing to techniques from approximation theory, the
computation of $e^{-L}v$ was reduced to a small number of computations
of the form $L^{-1}u.$ Thus, using the near-linear-time Laplacian
solver\footnote{A Laplacian solver is an algorithm that
  (approximately) solves a given system of linear equations $Lx=b,$
  where $L$ is a graph Laplacian and $b\in {\rm Im}(L)$, \emph{i.e.},
  it (approximately) computes $L^{-1}b,$ see \cite{Vishnoi12}.} due to
Spielman and Teng~\cite{SpielmanT04}, this gives an
$\tilde{O}(m)$-time algorithm for approximating $e^{-L}v$ for graphs
with $m$ edges.  The question of whether the Spielman-Teng result is
necessary in order to compute $e^{-L}v$ in near-linear time remained
open, see \cite[Chapter 9]{Vishnoi12}.  We answer this question in the
affirmative by presenting a reduction in the other direction, again
relying on analytical techniques.  The following is our main result.
\begin{theorem}
\label{thm:matrix-version}
Given $\eps,\delta \in (0,1],$ there exist $\poly( \log
\nfrac{1}{\delta\eps})$ numbers $0< w_j,t_j 
= O(\poly(\nfrac{1}{\delta\eps})),$ such that for all symmetric matrices
$A$ satisfying $\delta I \preceq A \preceq I,$ $(1-\eps)A^{-1}
\preceq \sum_j w_j e^{-t_j A} \preceq (1+ \eps)A^{-1}.$
\end{theorem}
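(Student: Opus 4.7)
The plan is to reduce the operator statement, via the spectral theorem, to a scalar approximation of $1/\lambda$ on $[\delta,1]$ with relative error $\eps$, and then to realize that approximation as a quadrature of an integral representation. Since $A$ is symmetric with spectrum in $[\delta,1]$ and every $e^{-t_j A}$ is a scalar function of $A$, it suffices to exhibit positive numbers $w_j,t_j$ of the required size such that
\begin{equation*}
\left|\frac{1}{\lambda} - \sum_j w_j\, e^{-t_j \lambda}\right| \;\leq\; \frac{\eps}{\lambda} \qquad\text{for every }\lambda\in[\delta,1];
\end{equation*}
the two-sided operator bound $(1-\eps)A^{-1}\preceq\sum_j w_j e^{-t_jA}\preceq(1+\eps)A^{-1}$ is then immediate from functional calculus since $A^{-1}\succ 0$.

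The starting identity is $\tfrac{1}{\lambda}=\int_0^\infty e^{-t\lambda}\,dt$. Trapezoidal quadrature in $t$ only gives polynomial convergence in the step size, so the key first move is the change of variables $t=e^s$, producing
\begin{equation*}
\frac{1}{\lambda} \;=\; \int_{-\infty}^{\infty} e^{\,s-\lambda e^s}\,ds.
\end{equation*}
The integrand $g_\lambda(s)\defeq e^{s-\lambda e^s}$ is entire, peaks at $s=-\log\lambda$, decays like $e^s$ as $s\to-\infty$ and \emph{doubly} exponentially as $s\to+\infty$. Hence truncating to $s\in[-S,S]$ with $S=O(\log\nfrac{1}{\delta\eps})$ only discards $\eps/\lambda$ worth of mass, uniformly in $\lambda\in[\delta,1]$.

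Next, discretize the truncated integral with the uniform trapezoidal rule of step $h$; pulling the change of variables back selects $t_j=e^{jh}$ and $w_j=h\,e^{jh}$, both automatically positive and in the right size range once $S$ and $h$ are fixed. To control the quadrature error uniformly in $\lambda$ I would apply the Euler-Maclaurin formula at order $K$: the boundary contributions are negligible by truncation, so only the remainder matters, and it satisfies
\begin{equation*}
|R_K| \;\leq\; \frac{|B_{2K}|}{(2K)!}\,h^{2K}\,\bigl\|g_\lambda^{(2K)}\bigr\|_{L^1} \;=\; \frac{2\,\zeta(2K)}{(2\pi)^{2K}}\,h^{2K}\,\bigl\|g_\lambda^{(2K)}\bigr\|_{L^1},
\end{equation*}
which is where the Riemann zeta function enters through the classical identity $|B_{2k}|=2(2k)!\,\zeta(2k)/(2\pi)^{2k}$. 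Bounding $\|g_\lambda^{(2K)}\|_{L^1}$ (conveniently via a contour shift: $g_\lambda$ extends analytically to strips $|\mathrm{Im}\,s|\leq d<\pi/2$ on which $\int|g_\lambda(x+iy)|\,dx=1/(\lambda\cos y)$) and optimizing $K\asymp 1/h$ yields relative error $\eps/\lambda$ as soon as $1/h=O(\log\nfrac{1}{\delta\eps})$. The total number of nodes is then $2S/h=\poly(\log\nfrac{1}{\delta\eps})$, and each $t_j=e^{s_j}$ lies in $[e^{-S},e^{S}]=\poly(\nfrac{1}{\delta\eps})$, as required.

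The main obstacle is the quantitative remainder estimate: $g_\lambda$ is doubly exponential and its high-order derivatives grow rapidly, so a naive real-variable bound on $\|g_\lambda^{(2K)}\|_{L^1}$ blows up with $K$. It is precisely the factor $(2\pi)^{-2K}$ coming from the zeta-function representation of Bernoulli numbers that counteracts this growth and permits a joint optimization of $K$ and $h$ in which the relative error $\eps/\lambda$ is achieved \emph{uniformly} in $\lambda\in[\delta,1]$ with only polylogarithmically many nodes. Once the scalar inequality is secured, the theorem follows at once by spectral substitution.
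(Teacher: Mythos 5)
Your proposal is correct and follows the same skeleton as the paper's argument: the spectral reduction to a scalar inequality on $[\delta,1]$, the identity $x^{-1}=\int_0^\infty e^{-tx}\,dt$, the substitution $t=e^s$, the trapezoidal rule, and the Euler--Maclaurin formula with the Bernoulli numbers controlled through Euler's formula for $\zeta(2k)$. You depart from the paper in two technical places. First, the paper never has to bound derivatives of $f_x(s)=e^{s-xe^s}$ at finite endpoints: it applies Euler--Maclaurin on $[-bh,bh]$, lets $b\to\infty$ so that the boundary terms vanish exactly, obtains an infinite-sum approximation valid for \emph{all} $x>0$, and only then truncates the \emph{sum}, majorizing the two tails by integrals of $f_x$ itself. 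You instead truncate the \emph{integral} to $[-S,S]$ first and assert that the Euler--Maclaurin boundary contributions are ``negligible by truncation''; that is true but not automatic --- smallness of the tail integral does not by itself control $g_\lambda^{(2j-1)}(\pm S)$ for $j\le K$, so you would need an explicit endpoint-derivative estimate (e.g.\ via Fact~\ref{lem:derivative} or Cauchy's formula again) and a slightly enlarged $S$; this is the one step of your outline that still needs to be written down. Second, your bound on $\|g_\lambda^{(2K)}\|_{L^1}$ by a contour shift in the strip $|\mathrm{Im}\,s|<\pi/2$, using $\int|g_\lambda(x+iy)|\,dx=1/(\lambda\cos y)$, replaces the paper's elementary real-variable induction (Fact~\ref{lem:derivative} and Lemma~\ref{lem:smoothness}), which only gives $\|f_x^{(k)}\|_{L^1}\le \frac{2}{x}e^k(k+1)^{2k}$. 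Your estimate, of order $(2K)!/d^{2K}$ per unit of $1/\lambda$, is genuinely sharper: it permits $h\asymp 1/K$ and hence $1/h=O(\log\nfrac{1}{\eps})$, whereas the paper's weaker smoothness bound forces $h\asymp 1/N^2$ and a correspondingly larger (though still polylogarithmic) number of terms. The paper buys a short, self-contained, real-variable proof; you buy better constants via complex analysis. Either route yields $\poly(\log\nfrac{1}{\delta\eps})$ terms with $0<w_j,t_j\le\poly(\nfrac{1}{\delta\eps})$, and the theorem then follows by the spectral substitution exactly as you and the paper both describe.
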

\noindent 
This proves that the problems of matrix exponentiation and matrix
inversion are equivalent up to polylogarithmic factors. This result
justifies the somewhat surprising use of Laplacian solvers for
matrix-exponential based methods for designing fast semi-definite
programming based algorithms for certain graph problems.  Note that
this equivalence does not require the matrix $A$ to be a Laplacian,
but only that it be a symmetric positive-definite matrix. It would be
interesting to investigate if this result can be used to construct
fast solvers for linear systems more general than those arising from
graph Laplacians. Finally, note that the numbers $w_j,t_j$ in the
above theorem are {\em independent} of the matrix $A,$ and are given
explicitly in the proof.

The proof of Theorem~\ref{thm:matrix-version}
follows from the  lemma below, which
gives such an approximation in the scalar world.
\begin{lemma}
\label{lem:exp-approx}
Given $\eps,\delta \in (0,1],$ there exist $\poly( \log
\nfrac{1}{\delta\eps})$ numbers $0< w_j,t_j =
O(\poly(\nfrac{1}{\delta\eps})),$ such that for all $x \in
[\delta,1],$ $(1-\eps)x^{-1} \le \sum_j w_j e^{-t_j x} \le (1+\eps)
x^{-1}.$
\end{lemma}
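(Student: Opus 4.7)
My plan is to approximate $1/x$ via quadrature applied to the classical integral representation
$$\frac{1}{x} = \int_0^\infty e^{-tx}\,dt,$$
so that the quadrature nodes and weights yield the desired positive exponential sum. A direct Riemann sum in $t$ does not work uniformly, since $e^{-tx}$ has a characteristic scale of $1/x$ that ranges over the whole interval $[1, 1/\delta]$. The standard remedy is the logarithmic substitution $t = e^s$, which yields
$$\frac{1}{x} = \int_{-\infty}^\infty e^{s - xe^s}\,ds,$$
so that the integrand $f_x(s) \defeq e^{s - xe^s}$ lives on a common scale for all $x \in [\delta,1]$: it peaks at $s^\ast = -\log x \in [0, \log(1/\delta)]$, decays doubly-exponentially as $s \to +\infty$, and decays like $e^s$ as $s \to -\infty$.

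Next I would truncate to a finite window $s \in [-L_-, L_+]$ with $L_- = \Theta(\log(1/\eps))$ and $L_+ = \Theta(\log(1/\delta) + \log\log(1/\eps))$. A short calculation shows the left tail is at most $e^{-L_-}$ and the right tail equals $\frac{1}{x}e^{-x e^{L_+}}$, and both are $\le \eps/x$ for these choices. On the truncated window I apply the trapezoidal rule with uniform spacing $h$, giving
$$\frac{1}{x} \;\approx\; \sum_j h\, e^{s_j}\, e^{-x e^{s_j}},$$
which is exactly of the form $\sum_j w_j\, e^{-t_j x}$ with positive $t_j \defeq e^{s_j}$ and positive $w_j \defeq h\,e^{s_j}$. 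Since $s_j \in [-L_-, L_+]$, we automatically get $t_j, w_j = O(\poly(1/(\delta\eps)))$.

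The key quantitative step is bounding the trapezoidal error. The function $f_x$ extends holomorphically to the strip $|{\rm Im}(s)| < \pi/2$, where $\mathrm{Re}(x e^s) = x\, e^{\mathrm{Re}(s)} \cos(\mathrm{Im}(s)) > 0$ keeps $|f_x|$ well controlled. This allows me to apply the Euler-Maclaurin formula: the remainder after $2m$ terms involves Bernoulli numbers times integrals of $f_x^{(2m)}$, and using the identity $|B_{2m}|/(2m)! = 2\,\zeta(2m)/(2\pi)^{2m}$, the trapezoidal error decays like $(h/\pi)^{2m}$ uniformly in $m$; optimizing $m$ yields a bound of the form $C\,e^{-c/h}/x$ for absolute constants $C, c > 0$. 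Choosing $h = \Theta(1/\log(1/\eps))$ then forces this error to be at most $\eps/x$, while the total number of nodes is $(L_- + L_+)/h = \poly(\log(1/(\delta\eps)))$, as required.

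The step I expect to be the main obstacle is obtaining the error estimate as a \emph{multiplicative} bound in $1/x$ uniformly on $[\delta, 1]$, rather than as a mere additive bound on an interval over which $1/x$ itself varies by a factor of $1/\delta$. This requires carefully tracking how the Euler-Maclaurin remainder and the derivative bounds on $f_x$ scale with $x$, and it is precisely here that the $\zeta(2m)$ estimates become indispensable: they let one show that the trapezoidal tail is uniformly a small \emph{fraction} of $\int f_x\,ds = 1/x$, and not just small in absolute value.
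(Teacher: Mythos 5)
Your proposal is correct and follows essentially the same route as the paper: the substitution $t=e^s$, the trapezoidal rule, the Euler--Maclaurin formula combined with the $\zeta(2m)$ bound on $\abs{b_{2m}}/(2m)!$, and separate bounds for the two tails. The only differences are organizational and quantitative rather than substantive: the paper applies Euler--Maclaurin over all of $\mathbb{R}$ (taking the limit $b\to\infty$ so that the boundary-derivative terms vanish) and only afterwards truncates the resulting infinite sum by comparison with integrals, whereas you truncate first and would therefore also need to bound the terms $f_x^{(2j-1)}$ at the endpoints; moreover, the elementary derivative estimate $\int\abs{f_x^{(2m)}}\le \frac{2}{x}e^{2m}(2m+1)^{4m}$ actually forces $h=\Theta(1/\log^2(1/\eps))$ rather than your claimed $\Theta(1/\log(1/\eps))$ (the $e^{-c/h}$ rate would require the analyticity-in-a-strip argument you allude to but do not carry out), which is immaterial for the polylogarithmic sparsity.
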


\noindent
Note that as $x$ approaches $0$ from the right, $x^{-1}$ is unbounded,
where as $e^{-t x}$ is bounded by 1 for any $t > 0.$ This justifies
the assumption that $x \in [\delta,1].$ Versions of this lemma were
proved in~\cite{BM05,BM10}. Our proof is simple and self-contained. We
attempt to make the analytical techniques used in the proofs
accessible to a wider theory audience.  We begin by showing how Lemma
\ref{lem:exp-approx} implies Theorem~\ref{thm:matrix-version}.

\subsection*{Proof of Theorem~\ref{thm:matrix-version}}
Let $\{\lambda_i\}_i$ be the eigenvalues of $A$ with corresponding
eigenvectors $\{u_i\}_i.$ Since $A$ is symmetric and $\delta I \preceq
A \preceq I,$ we have $\lambda_i \in [\delta,1],$ for all $i.$ Let
$w_j,t_j > 0$ denote the numbers given by Lemma~\ref{lem:exp-approx}
for parameters $\eps$ and $\delta.$ Thus, Lemma~\ref{lem:exp-approx}
implies that all $i,$ $(1-\eps)\lambda_i^{-1} \le \sum_j w_j e^{-t_j
  \lambda_i} \le (1+\eps) \lambda_i^{-1}.$ Note that if $\lambda_i$ is
an eigenvalue of $A,$ then $\lambda_i^{-1}$ is the corresponding
eigenvalue of $A^{-1}$ and $e^{-t_j \lambda_i}$ is that of $e^{-t_jA}$
with the same eigenvector.  Thus, multiplying the scalar inequalities
by $u_iu_i^\top$ and summing up, we obtain the matrix inequality
$(1-\eps) \sum_{i} \lambda_i^{-1} u_iu_i^\top \preceq \sum_j w_j
\sum_{i} e^{-t_j\lambda_i} u_iu_i^\top \preceq (1+\eps) \sum_{i}
\lambda_i^{-1} u_iu_i^\top.$ Hence, $ (1-\eps)A^{-1} \preceq \sum_j
w_j e^{-t_j A} \preceq (1+\eps) A^{-1}.$

\subsection{Integral Representation, Discretization  and Smoothness}
\label{sec:outline}
The starting point of the proof of Lemma \ref{lem:exp-approx} is the
easy integral identity $x^{-1} = \int_0^{\infty} e^{-xt} dt.$ Thus, by
discretizing this integral to a sum, the fact that one can approximate
$x^{-1}$ as a weighted sum of exponentials as claimed
Lemma~\ref{lem:exp-approx} is not surprising. The crux is to prove
that this can be achieved using a \emph{sparse} sum of exponentials.
One way to discretize an integral to a sum is the so called
\emph{trapezoidal rule}.
If $g$ is the integrand, and $[a,b]$ is the interval of integration,
this rule approximates the integral $\int_a^b g(t)dt$ by covering the
area under $g$ in the interval $[a,b]$ using \emph{trapezoids} of
small width, say $h,$ as follows: 
\vspace{-2mm}
\[\int_a^b g(t)dt \ \approx\  T_{g}^{[a,b],h} \ \defeq \ \frac{h}{2}
\cdot \sum_{j=0}^{K-1} \left(g(a+jh)+g(a+(j+1)h) \right),\] where $K
\defeq \frac{b-a}{h}$ is an integer. The choice of $h$ determines the
discretization of the interval $[a,b],$ and hence $K$, which is
essentially the sparsity of the approximating sum.  To apply this to
the integral representation for $x^{-1},$ we have to first truncate
the infinite integral $\int_{0}^\infty e^{-xt}dt$ to a large enough
interval $[0,b],$ and then bound the error in the trapezoidal rule.
Recall that the error needs to be of the form
\[\textstyle \abs{x^{-1} - \frac{h}{2}\sum_{j} \left(e^{-xjh} +
    e^{-x(j+1)h} \right)} \le \eps x^{-1}.\] For such an error
guarantee to hold, we must have $xh \le O_{\eps}(1).$ Thus, if we want
the approximation to hold for all $0 < x \leq 1,$ we require $h \le
O_\eps(1),$ which in turn implies that $K \geq \Omega_\eps({b}).$
Also, if we restrict the interval to $[0,b],$ the truncation error is
$\int_b^{\infty} e^{-xt} dt = x^{-1} e^{-bx},$ forcing $b \ge
\delta^{-1}\log \nfrac{1}{\eps}$ for this error to be at most
$\nfrac{\eps}{x}$ for all $x \in [\delta,1].$ Thus, this way of
discretizing can only give us a sum which uses
poly$\left(\nfrac{1}{\delta}\right)$ exponentials, which does not
suffice for our application.

This suggests that we should pick a discretization such that $t,$
instead of increasing linearly with $h,$ increases much more
rapidly. Thus, a natural idea is to allow $t$ to grow
geometrically. This can be achieved by substituting $t=e^s$ in the
above integral to obtain the identity $x^{-1} = \int_{-\infty}^\infty
e^{-xe^s + s} ds.$ We show that discretizing this integral using the
trapezoidal rule does indeed give us the lemma.

For convenience, we define $f_x(s) \defeq e^{-xe^s +s}.$ First,
observe that $f_x(s) = x^{-1}\cdot f_1(s+\ln x).$ Since we also allow
the error to scale as $x^{-1}$, as $x$ varies over $[\delta,1],$ $s$
needs to change only by an additive $\log \nfrac{1}{\delta}$ to
compensate for $x.$ Roughly, this suggests that when approximating
this integral by the trapezoidal rule, the dependence on
$\nfrac{1}{\delta}$ is likely logarithmic, instead of polynomial. The
proof formalizes this intuition and uses the fact that the error in
the approximation by the trapezoidal rule can be expressed using the
Euler-Maclaurin formula (see Section \ref{sec:EM}) which involves
higher order derivatives of $f_x.$ We establish the following
properties about the derivatives of $f_x$ which, when combined with
known estimates on Bernoulli numbers obtained from the Riemann zeta
function, allow us to bound this error with relative ease (see
Section~\ref{sec:infinite-exp}): (1) All the derivatives of $f_x$ up
to any fixed order, vanish at the end points of the integration
interval (in the limit). (2) The derivatives of $f_x$ are reasonably
\emph{smooth}; the $L_1$ norm of the $k$-th derivative is bounded
roughly by $x^{-1} k^k$ (see Lemma~\ref{lem:smoothness}). In summary,
this allows us to approximate $x^{-1}$ as an {\em infinite} sum of
exponentials. In this sum, the contribution beyond about $\poly(\log
{\nfrac{1}{\eps \delta}})$ terms turns out to be negligible, and hence
we can truncate the infinite sum to obtain our final approximation
(see Section~\ref{sec:truncate}).
 
We now present simple properties of the derivatives of $f_x,$ alluded
to above, which underlie the technical intuition as to why an
approximation of the kind claimed in Lemma~\ref{lem:exp-approx} should
exist. Let $f_x^{(k)}(s)$ denote the $k^\textrm{th}$ derivative of the
function $f_x$ with respect to $s$. The first fact relates $f_x
^{(k)}(s)$ to $f_x(s).$
\begin{fact}
\label{lem:derivative}
 For any non-negative integer $k,$  $f^{(k)}_x(s) = f_x(s)
  \sum_{j=0}^k c_{k,j} (-xe^s)^{j} ,$ where $c_{k,j}$ are some
  non-negative integers
  satisfying $\sum_{j=0}^k c_{k,j} \le (k+1)^{k+1}.$
\end{fact}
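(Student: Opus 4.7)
The plan is a direct induction on $k$. For the base case, from $f_x(s) = e^{-xe^s + s}$ one computes $f_x'(s) = f_x(s)\cdot(1 - xe^s)$, matching the claim at $k=1$ with $c_{1,0} = c_{1,1} = 1$. For the inductive step, rewrite the claim as $f_x^{(k)}(s) = f_x(s)\cdot P_k(-xe^s)$, where $P_k(y) \defeq \sum_{j=0}^{k} c_{k,j}\, y^j$. Making the substitution $y = -xe^s$, the crucial observation is that $\tfrac{dy}{ds} = -xe^s = y$. Differentiating once more,
\[
f_x^{(k+1)}(s) \ =\ f_x'(s)\, P_k(y) + f_x(s)\, P_k'(y)\cdot y \ =\ f_x(s)\bigl[(1+y)P_k(y) + y\, P_k'(y)\bigr].
\]
This yields the polynomial recurrence $P_{k+1}(y) = (1+y) P_k(y) + y P_k'(y)$, which, after matching powers of $y$, becomes the coefficient recurrence
\[
c_{k+1,j} \ =\ (j+1)\, c_{k,j} + c_{k,j-1},
\]
under the convention $c_{k,-1} = c_{k,k+1} = 0$. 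Since $P_0 \equiv 1$ has non-negative integer coefficients, induction immediately gives $c_{k,j} \in \Z_{\geq 0}$ for all $k,j$, and $\deg P_k = k$.

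For the sum bound, let $S_k \defeq \sum_{j=0}^{k} c_{k,j}$. Summing the recurrence over $j$,
\[
S_{k+1} \ =\ \sum_{j} (j+1)\, c_{k,j} + \sum_{j} c_{k,j-1} \ =\ \sum_{j} (j+2)\, c_{k,j} \ \leq\ (k+2)\, S_k,
\]
where the last step uses $0 \leq j \leq k$. Starting from $S_0 = 1$ and iterating gives $S_k \leq (k+1)!$, and the trivial bound $(k+1)! \leq (k+1)^{k+1}$ finishes the argument.

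The argument is entirely elementary; the only point worth flagging is the substitution $y = -xe^s$, whose self-replicating property $\tfrac{dy}{ds} = y$ ensures that each differentiation raises the polynomial degree by exactly one. Without this feature, one would naively expect the degree (and the coefficient growth) to blow up faster, and the clean combinatorial recurrence on $c_{k,j}$ — which is ultimately responsible for the $(k+1)^{k+1}$ bound used downstream in the smoothness estimates — would not emerge.
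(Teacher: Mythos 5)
Your proof is correct and follows essentially the same route as the paper: induction on $k$ yielding the identical coefficient recurrence $c_{k+1,j} = (j+1)c_{k,j} + c_{k,j-1}$, with the substitution $y = -xe^s$ being only a cosmetic repackaging of the paper's direct differentiation. Your bound on the coefficient sum is in fact marginally cleaner (and sharper), establishing $S_k \le (k+1)!$ first and then relaxing to $(k+1)^{k+1}$, whereas the paper runs the induction directly on the $(k+1)^{k+1}$ bound.
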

\begin{proof}
We prove this lemma by induction on $k$. For $k=0,$ we have
$f_x^{(0)}(s) = f_x(s).$ Hence, $f_x^{(0)}$ is of the required form,
with $c_{0,0} = 1,$ and $\sum_{j=0}^0 c_{0,j} = 1.$ Hence, the claim
holds for $k=0.$ Suppose the claim holds for $k.$ Hence, $f^{(k)}_x(s)
= f_x(s)\sum_{j=0}^k c_{k,j} (-xe^s)^{j},$ where $c_{k,j}$ are
non-negative integers satisfying $\sum_{j=0}^k c_{k,j} \le
(k+1)^{k+1}.$ We can compute $f_x^{(k+1)}(s)$ as follows,
\begin{align*}
f_x^{(k+1)}(s) & = \frac{d}{ds} \left( \sum_{j=0}^k c_{k,j}
  (-xe^s)^{j} f_x(s)\right) = \sum_{j=0}^k c_{k,j}
  (j -xe^s + 1) (-xe^s)^{j} f_x(s) \\
& = f_x(s) \sum_{j=0}^{k+1} ((j+1)c_{k,j} + c_{k,j-1}) (-xe^s)^j,
\end{align*}
where we define $c_{k,k+1}\defeq 0,$ and $c_{k,-1} \defeq 0.$ Thus, if
we define $c_{k+1,j} \defeq (j+1) c_{k,j} + c_{k,j-1},$ we get that
$c_{k+1,j} \ge 0,$ and that $f_x^{(k+1)}$ is of the required
form. Moreover, we get, $\sum_{j=0}^{k+1} c_{k+1,j} \le
(k+2)(k+1)^{k+1} +(k+1)^{k+1} = (k+3)(k+1)(k+1)^{k} \le (k+2)^2(k+1)^k
\le (k+2)^{k+2}.$ This proves the claim for $k+1$ and, hence, the fact
follows by induction.
\end{proof}

\noindent
The next lemma uses the fact above to bound the $L_1$ norm of
$f^{(k)}_x.$
\begin{lemma}
\label{lem:smoothness}
For every non-negative integer $k,$ $\int_{-\infty}^\infty
\abs{f_x^{(k)}(s)}ds \le \frac{2}{x} \cdot {e^{k}(k+1)^{2k}}.$
\end{lemma}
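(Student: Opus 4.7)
The plan is to combine the explicit formula for $f_x^{(k)}$ from Fact~\ref{lem:derivative} with a change of variables that turns each integrand into a Gamma integral. First I would apply the triangle inequality to the identity $f_x^{(k)}(s) = f_x(s)\sum_{j=0}^k c_{k,j}(-xe^s)^j$. Because $f_x(s)=e^{-xe^s+s}>0$ and each $c_{k,j}\ge 0$, this gives the clean pointwise estimate
\[
|f_x^{(k)}(s)| \;\le\; f_x(s)\sum_{j=0}^k c_{k,j}\,(xe^s)^j,
\]
so it suffices to control each moment $\int_{-\infty}^\infty f_x(s)(xe^s)^j\,ds$.

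Next I would substitute $u = xe^s$, which yields $du = xe^s\,ds = u\,ds$ and $e^s = u/x$. Under this substitution the integral becomes
\[
\int_{-\infty}^\infty f_x(s)(xe^s)^j\,ds \;=\; \int_0^\infty e^{-u}\cdot \frac{u}{x}\cdot u^j\cdot \frac{du}{u} \;=\; \frac{1}{x}\int_0^\infty u^j e^{-u}\,du \;=\; \frac{j!}{x}.
\]
Summing over $j$ and using the bound $\sum_j c_{k,j}\le (k+1)^{k+1}$ together with $j!\le k!$ gives
\[
\int_{-\infty}^\infty |f_x^{(k)}(s)|\,ds \;\le\; \frac{k!}{x}\sum_{j=0}^k c_{k,j} \;\le\; \frac{k!\,(k+1)^{k+1}}{x}.
\]

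Finally I would convert this to the desired form by the elementary inequalities $k!\le k^k\le (k+1)^k$ and $k+1\le 2e^k$ (which holds for every $k\ge 0$), so that
\[
\frac{k!\,(k+1)^{k+1}}{x} \;\le\; \frac{(k+1)^{2k+1}}{x} \;=\; \frac{k+1}{x}\cdot (k+1)^{2k} \;\le\; \frac{2}{x}\cdot e^k(k+1)^{2k},
\]
which is the claimed bound. There is no real obstacle here: the only nontrivial ingredient is the change of variables $u=xe^s$ (which is what makes the integrand factor as a Gamma integrand and explains the appearance of $x^{-1}$), and the rest is routine bookkeeping using Fact~\ref{lem:derivative} and crude factorial estimates. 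The slack in the constant is generous, so one does not have to be careful about the exact form of the bound $\sum_j c_{k,j}\le (k+1)^{k+1}$.
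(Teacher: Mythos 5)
Your proof is correct and follows essentially the same route as the paper: apply Fact~\ref{lem:derivative} with the triangle inequality, substitute $t=xe^s$ to reduce to Gamma integrals, and invoke the bound $\sum_j c_{k,j}\le (k+1)^{k+1}$. The only (immaterial) difference is that you evaluate each moment as $j!\le k!$ term by term, whereas the paper bounds the polynomial factor by $(k+1)^{k+1}\max\{1,t^k\}$ and splits the integral at $t=1$ to get $1+k!$; both yield the stated constant.
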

\begin{proof} By Fact \ref{lem:derivative}, $\int_{-\infty}^\infty
  \abs{f_x^{(k)}(s)}ds$ is at most
\begin{align*}
  & \int_{-\infty}^\infty \abs{\left( \sum_{j=0}^{k} c_{k,j}
      (-xe^s)^{j} \right)} e^{-xe^s +s } ds \stackrel{t=xe^s}{=}
  \frac{1}{x} \int_{0}^\infty \abs{\left(
      \sum_{j=0}^{n} c_{k,j} (-t)^{j} \right)} e^{-t} dt \\
  & \qquad \stackrel{{\rm Fact}\; \ref{lem:derivative}}{\le}
  \frac{1}{x} (k+1)^{k+1} \left(\int_0^1 e^{-t} dt + \int_1^\infty
    t^{k} e^{-t} dt\right) \le \frac{1}{x} \cdot {(k+1)^{k+1} \cdot
    (1+k!)} \le \frac{2}{x} \cdot {e^{k}(k+1)^{2k}},
\end{align*}
where the last inequality uses $k+1 \le e^{k},$ and $1+k! \le 2(k+1)^{k}.$
\end{proof}

We conclude this section by giving a brief comparison of our proof to
that from \cite{BM05}. While the authors in \cite{BM05} employ both
the trapezoidal rule and the Euler-Maclaurin formula, our proof
strategy is different and leads to a shorter and simpler proof. In
contrast to the previous proof, we use the Euler-Maclaurin formula in
the limit over $[-\infty,\infty],$ and since the derivatives of $f_x$
vanish in the limit, we save considerable effort in bounding the
derivatives at the end points of the integral, which is required when
using the Euler-Maclaurin formula to bound the error. We manage to use
simpler bounds, at the cost of slightly worse parameters. On the way,
we obtain an approximation of $x^{-1}$ as an infinite sum of
exponentials that holds for all $x > 0,$ which we believe is interesting
in itself.

\section{Proof of Lemma \ref{lem:exp-approx}}
Before we introduce the Euler-Maclaurin formula which captures the
error in the approximation of an integral by the trapezoidal rule, we
introduce the Bernoulli numbers and polynomials, bounds on which are
derived using a connection to the Riemann zeta function.

\subsection{Bernoulli Polynomials and Euler-Maclaurin Formula}
\label{sec:EM}
The Bernoulli numbers, denoted by $b_k$ for any integer $k \ge 0,$ are
a sequence of rational numbers which, while discovered in an attempt
to compute sums of the form $\sum_{i \geq 0}^n i^k,$ have deep
connections to several areas of mathematics, including number theory
and analysis.\footnote{The story goes that when Charles Babbage
  designed the Analytical Engine in the 19th century, one of the most
  important tasks he hoped the Engine would perform was the
  calculation of Bernoulli numbers.} They can be defined recursively
as: $b_0 = 1,$ and the following equation which is satisfied for all
positive integers $k \ge 2,$ $ \sum_{j=0}^{k-1} {k \choose j} b_j= 0.
$ This implies that $(e^t-1) \sum_{k=0}^\infty b_k \frac{t^k}{k!} =
t.$ Further, it can be checked that $\frac{t}{2} + \frac{t}{e^t-1}$ is
an even function, thus implying that $b_k=0$ for odd $k \ge 2.$ Given
the Bernoulli numbers, the Bernoulli polynomials are defined as
$B_k(s) \defeq \sum_{j=0}^k \binom{k}{j} b_j s^{k-j}.  $ It follows
from the definition that, for all $k,$ and $j \le k,$
$\frac{B_k^{(j)}(s)}{k!}  = \frac{B_{k-j}(s)}{(k-j)!}.$ We also get
$B_0(s) \equiv 1,$ $B_1(s) \equiv s - \frac{1}{2}.$ Moreover, using
the definition of Bernoulli numbers, we get that $B_k(0) = B_k(1) =
b_k$ for all $k \ge 2.$ We also need the following bounds on the
Bernoulli polynomials and the Bernoulli numbers.
\begin{lemma}
\label{lem:bernoulli}
For any non-negative integer $k,$ and for all $s \in [0,1],$ 
$\frac{|B_{2k}(s)|}{(2k)!} \le \frac{\abs{b_{2k}}}{(2k)!} \le \frac{4}{(2\pi)^{2k}}.$
\end{lemma}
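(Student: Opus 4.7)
I would derive both inequalities from the classical Fourier expansion of the Bernoulli polynomials on $[0,1]$. Specifically, for $k \geq 1$ and $s \in [0,1]$,
$$B_{2k}(s) \;=\; (-1)^{k+1}\,\frac{2(2k)!}{(2\pi)^{2k}}\sum_{n=1}^{\infty}\frac{\cos(2\pi n s)}{n^{2k}}.$$
This expansion can be produced by Fourier-expanding the $1$-periodic continuation of $B_{2k}\big|_{[0,1]}$: the coefficients are computed by iterated integration by parts using the recursion $B_k'(s) = k\,B_{k-1}(s)$ together with the boundary identities $B_k(0) = B_k(1) = b_k$ (for $k \ge 2$) that are already noted in the preceding paragraph, and uniform absolute convergence is immediate from the $n^{-2k}$ decay.

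Given the expansion, the first inequality $|B_{2k}(s)| \le |b_{2k}|$ is immediate: applying the triangle inequality and $|\cos(2\pi n s)| \le 1$ to the sum gives
$$\frac{|B_{2k}(s)|}{(2k)!} \;\le\; \frac{2}{(2\pi)^{2k}}\sum_{n=1}^{\infty}\frac{1}{n^{2k}} \;=\; \frac{2\,\zeta(2k)}{(2\pi)^{2k}},$$
and evaluating the same Fourier series at $s=0$ shows that this upper bound is \emph{attained} and equals $|b_{2k}|/(2k)!$ (this is Euler's classical identity $b_{2k} = (-1)^{k+1}\,2(2k)!\,\zeta(2k)/(2\pi)^{2k}$). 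For the second inequality I would use monotonicity of $\zeta$ on $[2,\infty)$: for $k \ge 1$, $\zeta(2k) \le \zeta(2) = \pi^2/6 < 2$, so $|b_{2k}|/(2k)! \le 4/(2\pi)^{2k}$. The case $k = 0$ is handled by inspection, since $B_0 \equiv b_0 \equiv 1$ and the right-hand side equals $4$.

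The only step requiring nontrivial analysis is establishing the Fourier expansion itself, which is the hard part. However, this is a textbook identity in the theory of Bernoulli polynomials, and the lemma's constants are loose enough that one could alternatively just quote Euler's closed form for $\zeta(2k)$ in terms of $b_{2k}$: that immediately yields the second inequality, and the first then reduces to showing $\max_{s \in [0,1]}|B_{2k}(s)| = |b_{2k}|$, which follows from $B_{2k}' = 2k\,B_{2k-1}$ together with the fact that $B_{2k-1}$ has only the three zeros $0, \tfrac{1}{2}, 1$ on $[0,1]$, so the extrema of $B_{2k}$ occur in $\{0, \tfrac{1}{2}, 1\}$ and the identity $B_{2k}(\tfrac{1}{2}) = (2^{1-2k}-1)\,b_{2k}$ shows $s=0,1$ dominates.
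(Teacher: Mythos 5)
Your proposal is correct and follows essentially the same route as the paper: the second inequality comes from Euler's identity $b_{2k} = (-1)^{k+1}2(2k)!\,\zeta(2k)/(2\pi)^{2k}$ together with $\zeta(2k)\le\zeta(2)<2$, exactly as in the paper's proof. The only difference is that the paper cites the bound $|B_{2k}(s)|\le|b_{2k}|$ as a known fact, whereas you derive it (correctly, including the separate $k=0$ check) from the standard Fourier expansion of the periodized Bernoulli polynomial, which simultaneously yields Euler's identity at $s=0$; this is just the standard proof of the cited fact, not a different approach.
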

\begin{proof}
  The first inequality follows from a well-known fact that
  $|B_{2k}(s)| \leq |b_{2k}|$ for all $s \in [0,1]$ (see
  \cite{Graham:1994:CMF:562056}). For the second inequality, we recall
  the following connection between Bernoulli numbers and the Riemann
  zeta function for any even positive integer, proved by Euler (see
  \cite{Graham:1994:CMF:562056}), $\zeta(2k) \defeq \sum_{j \ge 1}
  {j^{-2k}} = (-1)^{k+1}\frac{b_{2k} (2\pi)^{2k}}{2 \cdot (2k)!}.$
  Thus, $\frac{\abs{b_{2k}}}{(2k)!} = \frac{2}{(2\pi)^{2k}} \sum_{j
    \ge 1} j^{-2k} \le 4(2\pi)^{-2k}.$
\end{proof}

\renewcommand{\Re}{\mathbb{R}}

\noindent
One of the most significant connections in analysis involving the
Bernoulli numbers is the \emph{Euler-Maclaurin formula} which
describes the error in approximating an integral by the trapezoidal
rule.
\begin{lemma}[Euler-Maclaurin Formula]
  Given a function $g : \Re \to \Re,$ for any $a < b,$ any positive
  $h,$ and any positive integer $N \in \nat,$ we have,
\begin{align}
\label{eq:EM}
  \int_a^b g(s)ds - T_{g}^{[a,b],h} = h^{2N+1} \int_0^K
  \frac{B_{2N}(s-[s])}{(2N)!} g^{(2N)}(a+sh) ds - \sum_{j=1}^N
  \frac{b_{2j}}{(2j)!}h^{2j} \left(g^{(2j-1)}(b)-g^{(2j-1)}(a)\right),
\end{align}
where $K \defeq \frac{b-a}{h}$ is an integer, and $[\cdot]$ denotes
the integer part.
\end{lemma}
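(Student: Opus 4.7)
The plan is to establish the formula first on a single unit interval via repeated integration by parts, and then sum over the $K$ subintervals so that the boundary contributions telescope. On each subinterval $[a+ih, a+(i+1)h]$, substitute $s = a + ih + uh$ and set $\phi_i(u) \defeq g(a + ih + uh)$, so that $\phi_i^{(k)}(u) = h^k g^{(k)}(a+ih+uh)$ and $h \int_0^1 \phi_i(u)\, du = \int_{a+ih}^{a+(i+1)h} g(s)\, ds$. It therefore suffices to prove the unit-interval identity
\[
\int_0^1 \phi(u)\, du - \tfrac{1}{2}(\phi(0) + \phi(1)) = -\sum_{j=1}^N \frac{b_{2j}}{(2j)!}\bigl(\phi^{(2j-1)}(1) - \phi^{(2j-1)}(0)\bigr) + \frac{1}{(2N)!}\int_0^1 B_{2N}(u)\,\phi^{(2N)}(u)\, du
\]
for smooth $\phi$ on $[0,1]$, and then assemble.

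I would prove the unit-interval identity by iterated integration by parts. Since $B_1(u) = u - \tfrac{1}{2}$ satisfies $B_1' \equiv 1$, write $\int_0^1 \phi(u)\,du = \int_0^1 B_1'(u)\phi(u)\,du$; the first integration by parts produces the boundary term $[\phi(u)B_1(u)]_0^1 = \tfrac{1}{2}(\phi(0)+\phi(1))$ and the new integral $-\int_0^1 \phi'(u)B_1(u)\,du$. Each subsequent step uses $B_k(u) = B_{k+1}'(u)/(k+1)$ (equivalently, $B_{k+1}' = (k+1)B_k$, which is the identity $B_k^{(j)}/k! = B_{k-j}/(k-j)!$ stated in the excerpt with $j=1$) to raise the index of the Bernoulli polynomial by one, at the cost of a new boundary term involving $B_{k+1}(0) = B_{k+1}(1) = b_{k+1}$ (valid for $k+1 \ge 2$). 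Crucially, $b_{k+1} = 0$ for odd $k+1 \ge 3$, so these boundary contributions only survive when $k+1$ is even. After $2N$ steps, a short sign check shows that the surviving boundary terms combine into $-\sum_{j=1}^N \frac{b_{2j}}{(2j)!}(\phi^{(2j-1)}(1) - \phi^{(2j-1)}(0))$ and the residual integral becomes $+\frac{1}{(2N)!}\int_0^1 B_{2N}(u)\phi^{(2N)}(u)\,du$.

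Having the unit-interval identity, apply it to each $\phi_i$ and sum over $i = 0, \ldots, K-1$. The boundary sums
\[
\sum_{i=0}^{K-1} \bigl(\phi_i^{(2j-1)}(1) - \phi_i^{(2j-1)}(0)\bigr) \;=\; h^{2j-1}\sum_{i=0}^{K-1} \bigl(g^{(2j-1)}(a+(i+1)h) - g^{(2j-1)}(a+ih)\bigr)
\]
telescope to $h^{2j-1}(g^{(2j-1)}(b) - g^{(2j-1)}(a))$. The remainder pieces $\int_0^1 B_{2N}(u)\phi_i^{(2N)}(u)\,du$, after the substitution $s = i + u$ (so $u = s - [s]$ and $\phi_i^{(2N)}(u) = h^{2N} g^{(2N)}(a+sh)$), concatenate across $i$ into $h^{2N}\int_0^K B_{2N}(s - [s])\, g^{(2N)}(a + sh)\,ds$. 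Multiplying through by the remaining factor of $h$ coming from $h\int_0^1\phi_i = \int_{a+ih}^{a+(i+1)h} g$ yields the claimed equation.

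The only real obstacle is bookkeeping: carefully tracking signs through the $2N$ integrations by parts and verifying that all odd-index Bernoulli numbers drop out, which is what makes the explicit sum run only over the $b_{2j}$. Once the unit-interval identity is in hand, the passage to arbitrary $[a,b]$ via telescoping and the periodized remainder $B_{2N}(s-[s])$ is routine.
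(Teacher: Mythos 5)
Your proposal is correct and follows essentially the same route as the paper's own proof sketch: reduce to the unit interval by a change of variables, establish the identity there by repeated integration by parts through the Bernoulli polynomials (using $B_{k+1}'=(k+1)B_k$, the boundary values $B_k(0)=B_k(1)=b_k$, and the vanishing of odd Bernoulli numbers), and then assemble the general case by telescoping over the $K$ subintervals. You merely make explicit the change-of-variables and telescoping steps that the paper leaves implicit.
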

\noindent
Note that the Euler-Maclaurin formula is really a family of formulae,
one each for the choice of $N,$ which we call the \emph{order} of the
formula. Also note that this formula captures the error {\em exactly}. This
error can be much less than the naive bound obtained by summing up the
absolute value of the error due to each trapezoid.  The first term in
~\eqref{eq:EM}, after removing the contribution due to the Bernoulli
polynomials via Lemma \ref{lem:bernoulli}, can be bounded by the $L_1$
norm of $g^{(2N)}.$ The second term in \eqref{eq:EM} depends only on
$g^{(2N-1)}$ evaluated at the ends of the interval.  The choice of $N$
is influenced by how well behaved the higher order derivatives of the
function are. For example, if $g(s)$ is a polynomial, when $2N >
\text{degree}(g),$ we get an exact expression for $\int_a^b g(s)ds$ in
terms of the values of the derivatives of $g$ at $a$ and $b.$

In the next section, we use the Euler-Maclaurin formula to bound the
error in approximating the integral $\int f_x(s) ds$ using the
trapezoidal rule. For our application, we pick $a$ and $b$ such that
the derivatives up to order $2N-1$ at $a$ and $b$ are
negligible. Since the sparsity of the approximation is
$\Omega(\nfrac{1}{h}),$ for the sparsity to depend logarithmically on
the error parameter $\eps,$ we need to pick $N$ to be roughly
$\Omega(\log \nfrac{1}{\eps}),$ so that the first error term in
\eqref{eq:EM} is comparable to $\eps.$

We end this section by giving a proof sketch for the Euler-Maclaurin
formula (see also \cite{tao}). By a change of variables, it suffices to
prove the formula for $h=1$ and for the interval $[0,1].$ Consider the
integral $\int_0^1 \frac{B_{2N}^{(2N)}(s)}{(2N)!}g(s)ds,$ and apply
integration by parts\footnote{$\int \frac{du}{ds}v ds = u v - \int u
  \frac{dv}{ds} ds.$} to it repeatedly to obtain
\begin{align*}
  \int_0^1 \frac{B^{(2N)}_{2N}(s)}{(2N)!} g(s) ds = &
  \left. \frac{B_{2N}^{(2N-1)}(s)}{(2N)!} g(s) \right\rvert_0^1 -
  \left. \frac{B_{2N}^{(2N-2)}(s)}{(2N)!} g^{(1)}(s) \right\rvert_0^1
  + \left. \frac{B_{2N}^{(2N-3)}(s)}{(2N)!} g^{(2)}(s) \right\rvert_0^1
 \\
  & - \cdots - \left. \frac{B_{2N}(s)}{(2N)!} g^{(2N-1)}(s) \right\rvert_0^1 +
\int_0^1 \frac{B_{2N}(s)}{(2N)!} g^{(2N)}(s) ds.
\end{align*}

\noindent
Using the fact that for all $k \le 2N,$ $\frac{B_{2N}^{(k)}(s)}{(2N)!} =
\frac{B_{2N-k}(s)}{(2N-k)!},$ and rearranging, we get,
\begin{align*}
  \int_0^1 {B_{0}(s)} g(s) ds - \left. {B_{1}(s)} g(s)
    \vphantom{\frac{B_1(s)}{b!}} \right\rvert_0^1 = \sum_{k=2}^{2N}
  (-1)^{k-1}\left. \frac{B_{k}(s)}{k!} g^{(k-1)}(s) \right\rvert_0^1 +
  \int_0^1 \frac{B_{2N}(s)}{(2N)!} g^{(2N)}(s) ds.
\end{align*}
\noindent
Now, using $B_0(s) \equiv 1,$ we get that the first term on the
l.h.s. is $\int_0^1 g(s) ds.$ Also, since $B_1(1) = \nfrac{1}{2},
B_1(0) = -\nfrac{1}{2},$ we see that the second term on the l.h.s. is
$\nfrac{1}{2}\cdot (g(0)+g(1)) = T_g^{[0,1],1}.$ Finally, using
$B_k(0) = B_k(1) =b_k$ for $k \ge 2,$ and that $b_k = 0$ when $k \ge
2$ is odd, we get the desired formula.

\subsection{Approximation Using an Infinite Sum}
\label{sec:infinite-exp}
As mentioned in Section~\ref{sec:outline}, we approximate the integral
$\int_{-\infty}^{\infty} f_x(s)ds$ using the trapezoidal rule. We
bound the error in this approximation using the Euler-Maclaurin
formula. Since the Euler-Maclaurin formula applies to finite
intervals, we first fix the step size $h,$ use the Euler-Maclaurin
formula to bound the error in the approximation over the interval
$[-bh,bh]$ (where $b$ is some positive integer), and then let $b$ go
to $\infty.$ This allows us to approximate the integral over
$[-\infty,\infty]$ by an infinite sum of exponentials. In the next
section, we truncate this sum to obtain our final approximation.

We are given $\eps,\delta \in (0,1].$ Fix an $x \in [\delta,1],$ the
step size $h=\Theta\left((\log \nfrac{1}{\eps})^{-2}\right),$ and the
order of the Euler-Maclaurin formula, $N=\Theta\left(\log
  \nfrac{1}{\eps}\right)$ (exact parameters to be specified
later). For any positive integer $b,$ applying the order $N$
Euler-Maclaurin formula to the integral $\int_{-bh}^{bh} f_x(s) ds,$
and using bounds from Lemma~\ref{lem:bernoulli}, we get,
\begin{align}
\label{eq:EM-error}
\abs{\int_{-bh}^{bh} f_x(s)ds - T_{f_x}^{[-bh,bh],h}} \le & 4\left(
  \frac{h}{2\pi}
\right)^{2N} \int_{-bh}^{bh} \abs{f^{(2N)}_x(s)} ds \\
& \qquad + \sum_{j=1}^N 4 \left( \frac{h}{2\pi}\right)^{2j}
\left(\abs{f^{(2j-1)}_x(-bh)}
  +\abs{f^{(2j-1)}_x(bh)}\right). \nonumber
\end{align}
Now, we can use Fact~\ref{lem:derivative} to bound the derivatives in
the last term of
~\eqref{eq:EM-error}. Fact~\ref{lem:derivative} implies that
for any $s$ and any positive integer $k,$ $\abs{f^{(k)}(s)} \le
f_x(s)(k+1)^{k+1} \max\{1,(xe^s)^{k}\}.$ Thus, for $b \ge
-\frac{1}{h}\log \frac{1}{x},$ we have $xe^{-bh} \le 1$ and
$\abs{f^{(k)}(-bh)} \le e^{-bh} (k+1)^{k+1},$ and hence $f^{(k)}(-bh)$
vanishes for any fixed $k$ and $h,$ as $b$ goes to $\infty.$ Also, for
any $x > 0,$ and $b > \frac{1}{h} \log \frac{1}{x},$ we get,
$\abs{f^{(k)}(bh)} \le x^ke^{(k+1)bh-xe^{bh}} (k+1)^{k+1},$ which again
vanishes for any fixed $k$ and $h,$ as $b$ goes to $\infty.$ Thus,
letting $b$ go to $\infty$ and observing that $T_{f_x}^{[-bh,bh],h}$
converges to $h \sum_{j \in \bz} f_x(jh),$ ~\eqref{eq:EM-error}
implies,
\begin{align}
\label{eq:inf-sum-error}
\abs{\int_{-\infty}^\infty f_x(s)ds - h\sum_{j \in \bz} f_x(jh)} \le
4\left( \frac{h}{2\pi} \right)^{2N} \int_{-\infty}^{\infty}
\abs{f^{(2N)}_x(s)} ds.
\end{align}
Hence, since the derivatives of the function $f_x(s)$ vanish as $s$
goes to $\pm \infty,$ the error in approximating the integral over
$[-\infty,\infty]$ is just controlled by its \emph{smoothness}. Since
we already know $f_x$ is a very smooth function, we are in good shape.
Using Lemma~\ref{lem:smoothness}, we get, $
\left(\frac{h}{2\pi}\right)^{2N} \int_{-\infty}^{\infty}
\abs{f^{(2N)}_x(s)} ds \le
\frac{2}{x}\left(\frac{(2N+1)^2eh}{2\pi}\right)^{2N}.$ Thus, if we let
$h \defeq \frac{2\pi}{e^2(2N+1)^2},$ and $N \defeq \ceil{
  \frac{1}{2}\log \frac{24}{\eps}},$ ~\eqref{eq:inf-sum-error} implies
that,
\begin{align}
\label{eq:inf-sum-error-final}
\abs{x^{-1} - h\sum_{j \in \bz} e^{jh}\cdot e^{-xe^{jh}}} =
\abs{\int_{-\infty}^\infty f_x(s)ds - h\sum_{j \in \bz} f_x(jh)} \le
8e^{-2N}\cdot \frac{1}{x} \le \frac{\eps}{3} \frac{1}{x}.
\end{align}
Also note that the above approximation holds for all $x > 0.$ Thus, in
particular, we can approximate the function $x^{-1}$ over $[\delta,1]$
as an (infinite) sum of exponentials.

\subsection{Truncating the Infinite Sum and Proof of Lemma \ref{lem:exp-approx}}
\label{sec:truncate}
Now, we want to truncate the infinite sum of exponentials
approximating $x^{-1}$ given by
~\eqref{eq:inf-sum-error-final}. Since the function
$f_x(s)=e^{s}\cdot e^{-xe^{s}}$ is non-decreasing for $s < \log
\nfrac{1}{x},$ we  majorize the lower tail by an integral.  For $A
\defeq \floor{-\frac{1}{h}\log \frac{3}{\eps}} < 0 \le \frac{1}{h}\log
\frac{1}{x}$ (since $x \le 1$),

\begin{align}
\label{eq:lower-tail}
h\sum_{j < A} e^{jh}\cdot e^{-xe^{jh}} \le h \int_{-\infty}^A
e^{jh}\cdot e^{-xe^{jh}} dj = \int_{0}^{e^{Ah}} e^{-xt} dt = x^{-1}
\left(1 - e^{-xe^{Ah}}\right) \le \frac{\eps}{3} \frac{1}{x}.
\end{align}

\noindent
Again, for the upper tail, since the function $f_x(s) = e^{s}\cdot
e^{-xe^{s}}$ is non-increasing for $s \ge \log \frac{1}{x},$ we
majorize by an integral. For $B \defeq \ceil{\frac{1}{h} \log \left(
    \frac{1}{\delta} \log \frac{3}{\eps} \right)} \ge \frac{1}{h}\log
\frac{1}{x}$ (since $x \ge \delta$ and $\eps \le 1$),
\begin{align}
\label{eq:upper-tail}
h\sum_{j > B} e^{jh}\cdot e^{-xe^{jh}} \le h \int_{B}^\infty
e^{jh}\cdot e^{-xe^{jh}} dj = \int_{e^{Bh}}^{\infty} e^{-xt} dt =
x^{-1} \cdot e^{-xe^{Bh}} \le \frac{\eps}{3} \frac{1}{x}.
\end{align}
Before we complete the proof, we list here the setting of all
parameters for completeness:
\[N = \ceil{\frac{1}{2}\log \frac{24}{\eps}},\ h =
\frac{2\pi}{e^2(2N+1)^2},\ A = \floor{-\frac{1}{h} \log
  \frac{3}{\eps}},\ B = \ceil{\frac{1}{h} \log \left( \frac{1}{\delta}
    \log \frac{3}{\eps} \right)}.\] Thus, combining
~\eqref{eq:inf-sum-error-final},
\eqref{eq:lower-tail}~and~\eqref{eq:upper-tail}, the final error is
given by,
\begin{align*}
  \abs{\frac{1}{x}-h\sum_{j \ge A}^B e^{jh}\cdot e^{-xe^{jh}} } \le &
  \abs{\frac{1}{x} - h\sum_{j \in \bz} e^{jh}\cdot e^{-xe^{jh}} } + h
  \sum_{j < A} e^{jh}\cdot e^{-xe^{jh}} + h \sum_{j > B} e^{jh}\cdot
  e^{-xe^{jh}} \le \frac{\eps}{x}.
\end{align*}
Hence, $(1-\eps)x^{-1} \le \sum_{j \ge A}^B he^{jh}\cdot e^{-xe^{jh}} \le
(1+\eps)x^{-1},$ implying the claim of Lemma~\ref{lem:exp-approx}.

\bibliographystyle{plain}
\bibliography{reduction}
\end{document}